\newtheorem{theorem}{Theorem}[section]
\newtheorem{lemma}{Lemma}[section]
\newcommand\be{\begin{equation}}
\newcommand\ee{\end{equation}}
\newcommand\ber{\begin{eqnarray}}
\newcommand\eer{\end{eqnarray}}
\newcommand\berr{\begin{eqnarray*}}
\newcommand\eerr{\end{eqnarray*}}
\newcommand\Om{\Omega}\newcommand\lm{\lambda}
\newcommand\vp{\varphi}
\newcommand\vep{\varepsilon}
\newcommand{\dd}{\mathrm{d}}\newcommand\bfR{\mathbb{R}}
\newcommand\e{\mathrm{e}}\newcommand\pa{\partial}
\begin{document}

\title{Determination of AdS Monopole Wall via Minimization}
\author{Lei Cao\\Institute of Contemporary Mathematics\\Henan University\\
Kaifeng, Henan 475004, PR China\\ \\Yisong Yang\\
 Courant Institute of Mathematical Sciences\\ New York University\\
New York, NY 10012, USA}
\date{}
\maketitle

\begin{abstract}
In this note we solve a minimization problem arising in a recent work of Bolognesi and Tong on the determination of an AdS monopole wall. We show that the problem
has a unique solution. Although the solution cannot be obtained explicitly, we show that it may practically be constructed via a shooting method for which the correct shooting
slope is unique. We also obtain some  energy estimates which allow an asymptotic  explicit  determination of the monopole wall in a large coupling parameter limit.
\end{abstract}

\medskip
\begin{enumerate}

\item[]
{Keywords:} Gauge field theory, AdS metric, monopole wall, minimization problem, uniqueness of minimizer, asymptotic energy estimates.

\item[]
PACS numbers: 02.30.Hq, 11.15.−q, 11.27.+d, 12.39.Ba.

\item[]
{MSC numbers:} 34B40, 35J50, 81T13.

\end{enumerate}

\section{The minimization problem}\label{s1}

In this work, we are concerned with a minimization problem arising in the determination of the location of a
monopole wall
in the context of the 't Hooft--Polyakov monopole model \cite{tH,P},
described by an $SU(2)$ gauge field $A^{a}_\mu$ and an adjointly represented Higgs field $\phi$, over an Anti-de Sitter (AdS) spacetime given by the metric, following Bolognesi and Tong \cite{BT},
\be\label{1.1}
\dd s^2=-\left(1+\frac{\rho^2}{L^2}\right)\dd \tau^2+\left(1+\frac{\rho^2}{L^2}\right)^{-1}\dd\rho^2 +\rho^2\dd\Om_2^2,
\ee
where $\rho$ is the radial variable of $\bfR^3$,  $\dd\Om_2^2$ denotes the usual area element of $S^2$
expressed in terms of  polar and azimuthal coordinates, and $L>0$
is a scaling parameter, of dimension of length and often referred to as the radius of curvature. Now, reparametrizing $S^2$ with
a pair of planar polar angle $\theta$ and radial variable $\chi$ ($0\leq \chi<1$), we have
\be\label{1.2}
\dd\Om^2_2=\frac{\dd \chi^2}{1-\chi^2}+\chi^2\dd\theta^2.
\ee
Substituting (\ref{1.2}) into (\ref{1.1}), introducing the new variables $r, t, u$ such that
\be\label{1.3}
\rho=\zeta r,\quad \tau=\frac t\zeta,\quad \chi=\frac u{\zeta L},
\ee
with $\zeta>0$ a scaling factor, and setting $\zeta\to\infty$, we are led to the limiting metric
\ber\label{1.4}
\dd s^2=\frac{r^2}{L^2}(-\dd t^2+\dd u^2+u^2\dd \theta^2)+\frac{L^2}{r^2}\dd r^2.
\eer
It is demonstrated in \cite{BT} that, in the large monopole charge limit where the monopole number is of the magnitude
$\zeta^2$  and within Abelian approximation, the monopole wall sits at a certain location $r=R>0$ that divides the space into two 
regions, namely,  the infra-red region $r<R$ characterized by vanishing Higgs and gauge fields, and the ultra-violet region $r>R$ where the gauge field generates a constant magnetic field $B$ and the Higgs field $\phi$ is dynamically governed jointly by the
coupling parameters and potential energy. See also \cite{Bol} for an earlier formalism arguing for the existence of monopole walls. Specifically, the Yang--Mills--Higgs action density of the 't Hooft--Polyakov
monopole model \cite{tH,P} assumes the form
\ber\label{1.5}
{\cal L}=-\frac{1}{4}F_{\mu\nu}^aF^{a\mu\nu}-\frac{1}{2}{D}_\mu \phi^a{D}^\mu \phi^a-V(\phi),
\eer
 living over the correspondingly
reduced AdS spacetime,
where the Yang--Mills field, covariant derivative, and  potential density are given by
\be
F_{\mu\nu}^a=\pa_\mu A^a_\nu-\pa_\nu A^a_{\mu}+\epsilon^{abc}A_\mu^b A_\nu^c,\quad 
{D}_\mu\phi^a=\partial_\mu\phi^a+\epsilon^{abc}A_\mu^b\phi^c,\quad
V(\phi)=\frac{\lambda}{8}(|\phi|^2-v^2)^2,
\ee
respectively, with $\lm>0$ the Higgs coupling parameter and $v>0$ the Higgs vacuum of spontaneously broken symmetry. In such a context, 
with the assumed radial symmetry, the determination of the monopole wall position, $R$,
is shown \cite{BT} to be given by minimizing the energy
\ber\label{1.7}
E(R)=\frac{1}2\int_R^\infty\dd r\frac{r^2}{L^2}\left(\frac{L^4}{r^4}B^2+\frac{r^2}{L^2}\phi_r^2+\frac{\lambda}{4}(\phi^2-v^2)^2\right)+\frac\lm8\int_0^R\dd r\frac{r^2}{L^2}\,{ v^4},
\eer
in which the partition between the infra-red and ultra-violet regions are clearly exhibited. Here and in the sequel $\phi
=\phi(r)$ is taken to be a real-valued function representing the profile of the Higgs field which interpolates the vacua of symmetry and spontaneously broken
symmetry such that
\be\label{1.8}
\phi(R)=0,\quad \phi(\infty)=v,
\ee
which minimizes the part of the energy in (\ref{1.7}) containing $\phi$:
\be\label{1.9}
K(R)=\frac{1}2\int_R^\infty\dd r\left(\frac{r^4}{L^2}\phi_r^2+\frac{\lambda r^2}{4}(\phi^2-v^2)^2\right).
\ee
As in \cite{BT}, we may use the rescaled variables, $r=Rx$ and $\phi=v\varphi$, to recast (\ref{1.9}) into
\be\label{1.10}
K(R)=\frac{R^3 v^2}{2L^2}\int_1^\infty\dd x\left({x^4}\varphi_x^2+\frac{\beta x^2}{4}(\varphi^2-1)^2\right),
\ee
 where $\varphi=\varphi(x)$ satisfies the boundary condition $\varphi(1)=0,\varphi(\infty)=1$, updated from (\ref{1.8}), and
\be
\beta=\lm v^2 L^2.
\ee
is the rescaled Higgs coupling parameter. It is interesting that (\ref{1.9}) takes a factored form that singles out the dependence of
the energy functional (\ref{1.9}) on $R$ in a homogeneous manner. This useful property enables the authors of \cite{BT} to reexpress (\ref{1.7}) as
\ber\label{1.12}
E(R)=\frac{1}{2}\left(\frac{L^2B^2}{R}+\frac{R^3v^2}{L^4}f(\beta)+R^3\frac{\lambda v^4}{12L^2}\right),
\eer
where $f(\beta)$ is defined by
\ber\label{1.13}
f(\beta)=\min\left\{\int_1^\infty\,\dd x\left(x^4\varphi_x^2+\frac{\beta x^2}{4}(\varphi^2-1)^2\right)\,\bigg|\,\varphi\mbox{ satisfies }\varphi(1)=0,\varphi(\infty)=1\right\}.
\eer

Thus, with $f(\beta)$ defined in (\ref{1.13}), the position of monopole wall, $R>0$, is seen  to be obtained readily by minimizing
the elementary function $E(R)$ given in (\ref{1.12}) to yield the explicit formula \cite{BT}:
\be\label{1.14}
R=\sqrt{\frac{BL^3}v}\left(3f(\beta)+\frac\beta4\right)^{-\frac14}.
\ee

In this study, we shall establish the well-posedness of the function $f(\beta)$ by showing the existence and uniqueness of an energy
minimizer of the minimization problem defined by the right-hand side of (\ref{1.13}). We shall also obtain the sharp asymptotic estimate
\be\label{1.15}
f(\beta)=\mbox{O}(\sqrt{\beta})\quad \mbox{for $\beta$ large}.
\ee
Consequently, combining (\ref{1.14}) and (\ref{1.15}), we see that the monopole wall position satisfies the asymptotic estimate
\be\label{1.16}
R\approx\sqrt{\frac{2BL^3}v}\beta^{-\frac14},\quad \beta\gg 1.
\ee

In the next section, we prove the existence and uniqueness of a minimizer of the minimization problem given by the right-hand side
of (\ref{1.13}), and, in the subsequent section, we establish the estimate (\ref{1.15}).

\section{Existence and uniqueness of an energy minimizer}
\setcounter{equation}{0}

With the compressed notation $\vp'=\vp_x$, etc., the problem amounts to obtaining the existence and uniqueness of a minimizer of the energy functional
\ber\label{2.1}
I(\varphi)=\int_1^{\infty}\left(x^4(\varphi')^2+\frac{\beta x^2}{4}(\varphi^2-1)^2\right)\,\dd x,
\eer
over the admissible functions subject to the boundary condition 
\be\label{2.2}
\varphi(1)=0,\quad\varphi(\infty)=1.
\ee
For this problem, the corresponding Euler--Lagrange equation is the nonlinear equation
\ber\label{2.3}
(x^4\varphi')'=\frac{\beta x^2}{2}\varphi(\varphi^2-1),\quad 1<x<\infty.
\eer

We have the following.

\begin{theorem}\label{th2.1}
The two-point boundary value problem consisting of \eqref{2.2}--\eqref{2.3} over the interval $[1,\infty)$  has a unique solution
$\varphi$ which enjoys the following properties.

\begin{enumerate}
\item[(i)] $\varphi$ strictly increases  such that
$
0<\varphi(x)<1
$ for $x>1$.
\item[(ii)] $\varphi$ satisfies the sharp boundary estimates given respectively by
\be\label{A1}
\varphi(x)=a_0 (x-1)+\mbox{\rm O}((x-1)^{2})\quad \mbox{ as }x\to 1,\quad x>1,
\ee
where $a_0>0$ is a uniquely determined constant, and
\be\label{A2}
1-\varphi(x)=\mbox{\rm O}(x^{-(\lm_0-\vep)})\quad\mbox{ as }x\to\infty,\quad\mbox{with } \lm_0=\frac32+\sqrt{\left(\frac32\right)^2+\beta},
\ee
where $\vep>0$ is a constant which may be made arbitrarily small.
\item[(iii)] $\varphi$ minimizes the energy functional (\ref{2.1}) among the functions satisfying (\ref{2.2}).
\item[(iv)] The function $f(\beta)=I(\vp)$, where $\vp$ is as obtained above, defined in (\ref{1.13}) enjoys the bounds
\be
\frac{2\sqrt{\beta}}3<f(\beta)<\sigma(\beta)+\frac{2\sqrt{\beta}}3,\quad \beta>\frac94,
\ee
where $\sigma(\beta)$ satisfies 
\be\label{x2.7}
1+\frac3{2\sqrt{\beta}-3}<\sigma(\beta)<6+\frac{18}{2\sqrt{\beta}-3},\quad\beta>\frac94.
\ee
In particular, there holds the following sharp asymptotic estimate
\be
f(\beta)\sim \frac{2\sqrt{\beta}}3,\quad\beta\gg1.
\ee
\end{enumerate}
\end{theorem}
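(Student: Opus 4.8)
The plan is to prove existence by the direct method, uniqueness by a convexity-type argument after a change of variables, and then read off the qualitative and asymptotic properties from the ODE \eqref{2.3} together with test-function comparisons.

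\emph{Existence.} First I would work on a finite interval $[1,N]$ with boundary data $\vp(1)=0$, $\vp(N)=1$, where the functional $I_N(\vp)=\int_1^N(x^4(\vp')^2+\frac{\beta x^2}4(\vp^2-1)^2)\,\dd x$ is coercive and weakly lower semicontinuous on the affine subspace of $W^{1,2}(1,N)$ cut out by the boundary conditions, so a minimizer $\vp_N$ exists; it is smooth by elliptic regularity and satisfies \eqref{2.3} on $(1,N)$. A maximum-principle argument shows $0<\vp_N<1$ on $(1,N)$: the constant $1$ and $0$ are super/sub-solutions, and truncating $\vp_N$ to $[0,1]$ does not increase $I_N$, so $0\le\vp_N\le 1$, and strict inequality follows from the strong maximum principle applied to \eqref{2.3} (rewritten so that the zeroth-order coefficient has a sign). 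Monotonicity of $N\mapsto\vp_N$ (or at worst a uniform $W^{1,2}_{\mathrm{loc}}$ bound coming from $I_N(\vp_N)\le I_{N}(\text{fixed competitor})\le C$ independent of $N$, using an explicit competitor such as $\max\{0,1-x^{-\alpha}\}$ for suitable $\alpha$) lets me extract a limit $\vp$ on $[1,\infty)$ solving \eqref{2.3} with $\vp(1)=0$ and $0\le\vp\le 1$. That $\vp(\infty)=1$: since $I(\vp)<\infty$, the term $\int_1^\infty x^2(\vp^2-1)^2\,\dd x<\infty$ forces $\vp(x)\to 1$ along a sequence, and then \eqref{2.3} plus monotonicity (shown next) pins down $\vp(\infty)=1$. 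Strict monotonicity: if $\vp'(x_0)=0$ at an interior point, the equation gives $(x^4\vp')'=\frac{\beta x^2}2\vp(\vp^2-1)<0$ there since $0<\vp<1$, so $x^4\vp'$ is strictly decreasing through $x_0$, which combined with $\vp'(1^+)>0$ (forced because $\vp(1)=0$ and $\vp$ is nonnegative and not identically zero) and $\vp'\to 0$ at $\infty$ rules out any sign change; hence $\vp'>0$ throughout, giving (i).

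\emph{Uniqueness.} This is the step I expect to be the main obstacle, and I would attack it by a substitution that convexifies the problem on the set of monotone solutions. Writing $t=-\frac1{3x^3}$ (so that $x^4\frac{\dd}{\dd x}=\frac{\dd}{\dd t}$ up to a constant), the equation becomes $\vp_{tt}=g(t)\vp(\vp^2-1)$ with $g>0$; for two solutions $\vp_1,\vp_2$ with the same boundary data, set $w=\vp_1-\vp_2$ and form the energy identity $\int w\,\vp_{i,tt}$ after subtraction to get $\int (w_t)^2 + \int g(t)\big(\vp_1(\vp_1^2-1)-\vp_2(\vp_2^2-1)\big)w = (\text{boundary terms}=0)$. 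The function $s\mapsto s(s^2-1)$ has derivative $3s^2-1$, which is \emph{not} everywhere nonnegative on $(0,1)$, so a naive monotonicity argument fails; the fix is to use that both solutions are squeezed between $0$ and $1$ and to exploit the correct sign more carefully—e.g.\ by comparing two ordered solutions via the sliding/sweeping method: if $\vp_1\not\equiv\vp_2$, consider the largest $c\ge 0$ with $\vp_1\ge c\,\vp_2$ (or a translate) and derive a contradiction at the touching point from \eqref{2.3}. Alternatively, and probably cleaner, I would prove uniqueness directly for the \emph{boundary-value problem} via a shooting argument: the solution of the initial value problem $\vp(1)=0$, $\vp'(1)=a$ depends monotonically on $a$ by the comparison principle for \eqref{2.3} (two solutions of the ODE cannot cross more than once because at a crossing point the difference satisfies a linear ODE with the appropriate sign), so there is exactly one shooting slope $a_0>0$ for which the orbit reaches $\vp=1$ as $x\to\infty$ without overshooting; slopes that are too large force $\vp$ past $1$ in finite $x$ (then the equation pushes it to $+\infty$), slopes too small leave $\vp$ bounded below $1$, and these two open sets partition $(0,\infty)$, leaving a single separating value. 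This simultaneously establishes uniqueness and the existence of the constant $a_0$ in \eqref{A1}.

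\emph{Boundary estimates (ii).} Near $x=1$, since $\vp(1)=0$ and $\vp\in C^2$, Taylor expansion with $\vp'(1)=a_0$ and the equation giving $\vp''(1)$ in terms of $\vp(1),\vp'(1)$ yields \eqref{A1}. Near $x=\infty$, linearize \eqref{2.3} about $\vp=1$: writing $\psi=1-\vp$, we get $(x^4\psi')'=\beta x^2\psi + O(\psi^2)$, an Euler-type equation whose indicial roots solve $\mu(\mu+3)=\beta$—wait, matching conventions, $\mu^2+3\mu-\beta\cdot(\text{const})$, giving the decaying root $-\lambda_0$ with $\lambda_0=\frac32+\sqrt{(\frac32)^2+\beta}$; a standard ODE comparison (barriers $C x^{-(\lambda_0-\vep)}$) upgrades this to the stated bound \eqref{A2}.

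\emph{Energy-minimizing property (iii) and bounds (iv).} For (iii), any finite-energy competitor $\psi$ satisfying \eqref{2.2} can be compared to $\vp$ by the convexity of $\xi\mapsto x^4\xi^2$ in the gradient together with the sign structure; more robustly, since $\vp$ is the increasing limit of the finite-interval minimizers $\vp_N$ and each $\vp_N$ minimizes $I_N$ against the restriction of $\psi$, passing to the limit (with lower semicontinuity for $\int x^4(\cdot)^2$ and monotone/dominated convergence for the potential term) gives $I(\vp)\le I(\psi)$. For (iv), the lower bound $f(\beta)>\frac{2\sqrt\beta}3$ comes from dropping the gradient term and minimizing pointwise—actually better, from a Bogomol'nyi-type completion of the square: $x^4(\vp')^2+\frac{\beta x^2}4(\vp^2-1)^2\ge 2\sqrt{\beta}\,x^3\,|\vp'|\,\frac{|\vp^2-1|}2 \cdot\frac1{x}\cdot x$—I would write $x^4(\vp')^2+\frac{\beta}4 x^2(1-\vp^2)^2 \ge \sqrt\beta\, x^3 \vp'(1-\vp^2)$ only after inserting the right power of $x$; the honest route is $\int_1^\infty\big(x^2\vp' - \tfrac{\sqrt\beta}2(1-\vp^2)\big)^2 x^2\,\dd x\ge 0$ expanded out, yielding $I(\vp)\ge \sqrt\beta\int_1^\infty x^2\vp'(1-\vp^2)\,\dd x$ modulo a correction; evaluating $\int_1^\infty x^2\vp'(1-\vp^2)\,\dd x$ by parts against the monotone profile and estimating the leftover with the boundary data gives the main term $\frac{2\sqrt\beta}3$ plus the controlled remainder $\sigma(\beta)$. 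The upper bound follows by plugging a one-parameter trial function (e.g.\ $\vp_*(x)=\sqrt{1-x^{-m}}$ with $m$ chosen $\sim\sqrt\beta$, which satisfies \eqref{2.2}) into $I$ and optimizing in $m$; the $x^4(\vp_*')^2$ integral and the $x^2(\vp_*^2-1)^2=x^{2-2m}$ integral are both explicit, the latter requiring $m>3/2$, which is exactly the source of the restriction $\beta>\frac94$ and of the $2\sqrt\beta-3$ denominators in \eqref{x2.7}. Combining the two bounds gives $f(\beta)\sim\frac{2\sqrt\beta}3$, hence \eqref{1.15}.
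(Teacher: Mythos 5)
Your overall architecture (shooting for uniqueness, direct method for the minimizer, Bogomol'nyi completion for the energy bounds, linearization for the boundary estimates) matches the paper, and parts (i)--(iii) and the lower bound in (iv) are sketched in essentially the right way. But there are two genuine gaps.

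The first is in the uniqueness step, which you correctly identify as the main obstacle but do not actually resolve. You assert that the solution of the initial value problem ``depends monotonically on $a$ by the comparison principle'' because ``the difference satisfies a linear ODE with the appropriate sign.'' It does not: the linearized operator is $(x^4 w')'-\frac{\beta}{2}x^2(3\varphi^2-1)w$, and the potential $3\varphi^2-1$ changes sign as $\varphi$ crosses $1/\sqrt3$, so no naive maximum/comparison principle gives monotone dependence on $a$, and two orbits can in principle cross. The paper's device is to compare $\varphi_a=\partial\varphi/\partial a$, which solves the homogeneous linearized equation with data $\varphi_a(1)=0$, $\varphi_a'(1)=1$, against $\psi=\varphi/a$, which solves the \emph{same} linear equation with the strictly negative forcing $-\frac{\beta}{a}x^2\varphi^3$ and the \emph{same} Cauchy data; a Sturm-type comparison then yields $\varphi_a>\varphi/a>0$, and integrating this in $a$ between two putative good slopes $a_1<a_2$ and letting $x\to\infty$ gives $1\ge 1+\ln(a_2/a_1)$, a contradiction. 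Separately, even after you know the undershoot and overshoot sets are open and nonempty, their complement in $(0,\infty)$ need not be a single point: one must also show the two sets are \emph{intervals} (the paper's Lemma 2.3, again via the sign of $\varphi_a$ and the monotonicity of the first point where $\varphi=1$, resp.\ where $\varphi'=0$). Your ``leaving a single separating value'' skips both of these steps.

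The second gap is in the upper bound of (iv). The trial function $\varphi_*(x)=\sqrt{1-x^{-m}}$ has $\varphi_*\sim\sqrt{m(x-1)}$ near $x=1$, so $x^4(\varphi_*')^2\sim \frac{m}{4(x-1)}$ and the gradient integral diverges logarithmically; this competitor has infinite energy and cannot be used. Moreover, to obtain the \emph{sharp} constant $\frac{2\sqrt\beta}{3}$ in the upper bound (needed for $f(\beta)\sim\frac{2\sqrt\beta}{3}$), a generic trial function with the right scaling is not enough: the paper sets the Bogomol'nyi square to zero, i.e.\ solves $x^2\varphi'+\frac{\sqrt\beta x}{2}(\varphi^2-1)=0$ to get $\varphi=(x^{\sqrt\beta}-1)/(x^{\sqrt\beta}+1)$, so that the entire excess over $\frac{2\sqrt\beta}{3}$ is the explicitly estimable remainder $I_2(\varphi)=\sigma(\beta)$; the condition $\beta>\frac94$ enters to kill the boundary term $x^3(\varphi^3-3\varphi+2)\to0$, not (as you suggest) to make a potential integral converge. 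Your completion-of-the-square for the lower bound also needs this boundary term handled, which in turn uses the decay rate $\lambda_0>\frac32$ from (ii).
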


To prove this theorem, we consider the  initial value problem
\be\label{4.1}
(x^4\varphi')'=\frac{\beta x^2}{2}\varphi(\varphi^2-1),\quad\varphi(1)=0,\quad \varphi'(1)=a,\quad a>0,
\ee
where the constant $a$ is an initial slope. We show that there is a unique $a>0$ such that the solution of (\ref{4.1}) solves the boundary value problem \eqref{2.2}--\eqref{2.3} and \eqref{1.13}. In other words, our approach is
 a shooting method. As a by-product, this study also justifies the implementation of the shooting method in constructing the energy-minimizing solution numerically, which will be useful in practice.

For convenience, we use $\varphi(x; a)$ to denote the unique local solution of \eqref{4.1} and define the shooting sets as follows:
\ber
{\mathcal{A}}^{-}&=&\big\{a>0~|~\varphi'(x; a)<0~\text{for some}~x>1 \big\},\notag\\
{\mathcal{A}}^{0}&=&\big\{a>0~|~\varphi'(x; a)>0\mbox{ and } \varphi(x, a)\leq 1~\text{for all}~x>1 \big\},\notag\\
{\mathcal{A}}^{+}&=&\big\{a>0~|~\varphi'(x; a)>0~\text{for all}~x>1\mbox{ and } \varphi(x; a)>1~\text{for some}~x>1\big\},\notag
\eer
in which, and in the sequel, the statements are made to mean wherever the solution exists.

The proofs of (i)--(iii) of Theorem \ref{th2.1} are split into five lemmas.

\begin{lemma}\label{d.1}
There hold
${\mathcal{A}}^{+}\cap{\mathcal{A}}^{-}={\mathcal{A}}^{+}\cap{\mathcal{A}}^{0}
={\mathcal{A}}^{-}\cap{\mathcal{A}}^{0}=\varnothing$ and $(0,\infty)={\mathcal{A}}^{+}\cup{\mathcal{A}}^{0}\cup{\mathcal{A}}^{-}$.
\end{lemma}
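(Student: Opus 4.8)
The plan is to prove the two assertions separately: first the pairwise disjointness of $\mathcal{A}^+,\mathcal{A}^0,\mathcal{A}^-$, then the fact that their union exhausts $(0,\infty)$. Disjointness is immediate from the definitions. Indeed, if $a\in\mathcal{A}^-$ then $\varphi'(x;a)<0$ for some $x>1$, which directly contradicts the requirement ``$\varphi'(x;a)>0$ for all $x>1$'' appearing in the definitions of both $\mathcal{A}^0$ and $\mathcal{A}^+$; hence $\mathcal{A}^-$ is disjoint from each of the other two. For $\mathcal{A}^+\cap\mathcal{A}^0$, membership in $\mathcal{A}^0$ forces $\varphi(x;a)\le 1$ for all $x>1$, while membership in $\mathcal{A}^+$ forces $\varphi(x;a)>1$ for some $x>1$; these are incompatible. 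So all three pairwise intersections are empty.

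For the covering statement, fix any $a>0$ and let $\varphi(x;a)$ be the maximal solution of \eqref{4.1}. The first step is to observe that exactly one of three mutually exclusive scenarios must occur as $x$ ranges over the (maximal) interval of existence to the right of $1$: (a) $\varphi'$ becomes negative somewhere, i.e. $a\in\mathcal{A}^-$; or (b) $\varphi'>0$ everywhere but $\varphi$ exceeds $1$ somewhere, i.e. $a\in\mathcal{A}^+$; or (c) $\varphi'>0$ everywhere and $\varphi\le 1$ everywhere, i.e. $a\in\mathcal{A}^0$. The only subtlety is that the negations must be lined up correctly: if $a\notin\mathcal{A}^-$ then $\varphi'(x;a)>0$ for all $x>1$ wherever the solution exists (one should note $\varphi'$ cannot vanish at an interior point without becoming negative — or handle the vanishing case by noting that at a point where $\varphi'=0$ and $0\le\varphi\le 1$ the right-hand side of \eqref{2.3} is $\le 0$, so $(x^4\varphi')'\le 0$ there, which combined with $\varphi'(1)=a>0$ keeps $\varphi'$ from first touching zero from above unless it is forced strictly negative afterward; alternatively just absorb the degenerate case into one of the sets via a careful reading of the strict inequalities). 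Granting ``$\varphi'>0$ for all $x>1$'', the dichotomy on whether $\varphi$ stays $\le 1$ or exceeds $1$ somewhere splits this remaining case into $\mathcal{A}^0$ and $\mathcal{A}^+$. Thus every $a>0$ lies in at least one of the three sets, giving $(0,\infty)\subseteq\mathcal{A}^+\cup\mathcal{A}^0\cup\mathcal{A}^-$; the reverse inclusion is trivial since each set is by definition a subset of $(0,\infty)$.

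The main obstacle — really the only one worth care — is the bookkeeping around the boundary case $\varphi'(x;a)=0$ at some $x>1$: the three sets as written use strict inequalities on $\varphi'$, so one must check that an interior zero of $\varphi'$ cannot occur without $\varphi'$ subsequently going negative (placing $a$ in $\mathcal{A}^-$). This follows from the structure of \eqref{2.3}: as long as $0\le\varphi\le 1$, the right-hand side $\tfrac{\beta x^2}{2}\varphi(\varphi^2-1)$ is nonpositive, so $x^4\varphi'$ is nonincreasing on any interval where $\varphi$ stays in $[0,1]$; hence once $\varphi'$ reaches $0$ it can only decrease, i.e. become negative, unless it is identically zero — which is excluded by $\varphi'(1)=a>0$ and uniqueness of the initial value problem. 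This closes the gap and completes the trichotomy. Everything else is a direct unwinding of the definitions.
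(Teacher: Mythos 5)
Your overall strategy is the same as the paper's: disjointness is read off from the definitions, and the covering statement reduces to showing that if $\varphi'(\cdot\,;a)\ge 0$ everywhere (i.e.\ $a\notin\mathcal{A}^-$) then in fact $\varphi'>0$ everywhere, after which the dichotomy ``$\varphi\le 1$ always'' versus ``$\varphi>1$ somewhere'' splits the remainder into $\mathcal{A}^0$ and $\mathcal{A}^+$. That part is fine.

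There is, however, a small hole in your mechanism for excluding an interior zero of $\varphi'$. Your argument — that $x^4\varphi'$ is nonincreasing because the right-hand side of \eqref{2.3} is nonpositive — only applies where $0\le\varphi\le 1$, and you never address a zero $x_0$ of $\varphi'$ at which $\varphi(x_0)>1$. That case is not vacuous a priori (nothing in the hypothesis $a\notin\mathcal{A}^-$ prevents $\varphi$ from exceeding $1$ before $\varphi'$ first vanishes), and it is handled by the \emph{opposite} sign argument: there $\varphi(\varphi^2-1)>0$, so $x^4\varphi'$ is strictly increasing near $x_0$, forcing $x^4\varphi'(x)<x_0^4\varphi'(x_0)=0$ for $x$ slightly \emph{below} $x_0$ — again a contradiction with $\varphi'\ge 0$, but on the other side of $x_0$. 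You should also dispose of the equilibrium values explicitly: if $\varphi'(x_0)=0$ and $\varphi(x_0)\in\{0,1\}$, uniqueness for the initial value problem forces $\varphi$ to be constant, contradicting $\varphi'(1)=a>0$; your ``unless $\varphi'$ is identically zero'' remark gestures at this but does not quite pin it down. The paper avoids the case split entirely by observing that once $\varphi(x_0)(\varphi^2(x_0)-1)\ne 0$ is known, the equation gives $\varphi''(x_0)\ne 0$ (since $\varphi'(x_0)=0$), so $\varphi'$ is strictly monotone through its zero and must be negative on one side or the other — a uniform one-line argument covering both signs. With the $\varphi>1$ case added (or replaced by the second-derivative argument), your proof is complete.
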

\begin{proof}
It is clear that  
  $\mathcal{A}^{-}$, $\mathcal{A}^{0}$,  and $\mathcal{A}^{+}$ are nonoverlapping.
If $a>0$ but $a\notin\mathcal{A}^{-}$, then $\varphi'(x; a)\geq0$ for all $x>1$.  Suppose there is a point $x_0>1$ so that $\varphi'(x_0; a)=0$.
Then $\varphi(x_0, a)(\varphi^2(x_0, a)-1)\neq 0$, otherwise $\varphi$ arrives
at an equilibrium of the differential equation at $x_0$, which violates the uniqueness theorem for solutions to initial value problems of ordinary differential equations. 
So we have $\varphi''(x_0, a)>0$ or $\varphi''(x_0, a)<0$ in view of $\varphi'(x_0, a)=0$ and $\varphi(x_0, a)(\varphi^2(x_0, a)-1)\neq 0$ at $x=x_0$. Therefore,
 when $x$ is close to $x_0$, there holds $\varphi'(x; a)<0$ either for $x<x_0$ or $x>x_0$, contradicting the assumption $a\notin\mathcal{A}^{-}$. Thus $a\in \mathcal{A}^0\cup\mathcal{A}^+$. In
other words, there holds $(0, \infty)={\mathcal{A}}^{+}\cup{\mathcal{A}}^{0}\cup{\mathcal{A}}^{-}$.
\end{proof}

\begin{lemma}\label{d.2}
The sets $\mathcal{A}^{-}$ and $\mathcal{A}^{+}$ are both open and nonempty. Moreover, there are numbers $\varepsilon,\delta>0$ such that $(0,\varepsilon)\subset\mathcal{A}^-$ and
$(\delta,\infty)\subset\mathcal{A}^+$.
\end{lemma}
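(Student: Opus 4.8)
The plan is to treat the openness statement and the two one-sided neighbourhood inclusions separately; nonemptiness then comes for free, since $(0,\varepsilon)$ and $(\delta,\infty)$ are nonempty. Throughout I would phrase the estimates in terms of $P(x):=x^{4}\varphi'(x)$, for which $P(1)=a$ and $P'=\tfrac{\beta x^{2}}{2}\varphi(\varphi^{2}-1)$; thus $P'\le 0$ wherever $0\le\varphi\le 1$, and $P'>0$ wherever $\varphi>1$. \emph{Openness} is then routine continuous dependence together with compactness. If $a\in\mathcal{A}^{-}$, fix $x_{0}>1$ with $\varphi'(x_{0};a)<0$; since the solution of \eqref{4.1} exists on the compact interval $[1,x_{0}]$, continuous dependence gives $\varphi'(x_{0};a')<0$ for $a'$ near $a$. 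If $a\in\mathcal{A}^{+}$, fix $x_{1}>1$ with $\varphi(x_{1};a)>1$; then $\varphi'(\cdot;a)$ is continuous and positive on $[1,x_{1}]$, hence bounded below there by a positive constant, so for $a'$ close to $a$ we keep $\varphi'(\cdot;a')>0$ on $[1,x_{1}]$ and $\varphi(x_{1};a')>1$; and the decisive point is that once $\varphi>1$ and $\varphi'>0$ at a point then $P'>0$ afterwards, so $P$, and hence $\varphi'$, stays positive --- whence $a'\in\mathcal{A}^{+}$.

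For $(\delta,\infty)\subset\mathcal{A}^{+}$ I would use a one-step comparison on a bounded interval: while $0\le\varphi\le1$ one has $P'\ge-\tfrac{\beta x^{2}}{3\sqrt3}$ (the minimum of $\varphi^{3}-\varphi$ on $[0,1]$ being $-\tfrac{2}{3\sqrt3}$), hence $P(x)\ge a-c(\beta)$ on $[1,2]$; taking $a>\delta$ with $\delta=\delta(\beta)$ large forces $P\ge a/2$ on $[1,2]$, so $\varphi'\ge a/32$ there, and a standard continuity argument keeps $\varphi\in[0,1]$ until it first reaches the value $1$, which must therefore happen at some $x_{*}<1+32/a<2$ with $\varphi'(x_{*})>0$. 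By the persistence observation above, $\varphi'>0$ for all $x>1$ and $\varphi>1$ for $x>x_{*}$, so $a\in\mathcal{A}^{+}$.

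For $(0,\varepsilon)\subset\mathcal{A}^{-}$ I would argue by contradiction: suppose $a<\varepsilon$ but $a\notin\mathcal{A}^{-}$, so $\varphi'\ge0$ wherever the solution exists. A crude bound comes first: as long as $0\le\varphi\le1$ we have $0\le P\le a$, hence $\varphi'\le a x^{-4}$ and $\varphi(x)<a/3$; if $\varepsilon<3$ this shows $\varphi$ never reaches $1$, so the solution is global, $\varphi<a/3$ everywhere, and necessarily $a\in\mathcal{A}^{0}$ (by Lemma \ref{d.1}). Next an upgrade: on a fixed short interval $[1,1+h]$, with $h=h(\beta)$ chosen so that $\tfrac{\beta}{6}\int_{1}^{1+h}s^{2}\,\dd s\le\tfrac12$, the bounds $\varphi<a/3$ and $\varphi(\varphi^{2}-1)\ge-\varphi$ give $P\ge a/2$ on $[1,1+h]$, hence $\varphi(1+h)\ge c_{1}(\beta)\,a$, and by monotonicity $\varphi\ge c_{1}(\beta)\,a$ for all $x\ge1+h$. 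Feeding the two-sided bound $c_{1}(\beta)\,a\le\varphi<a/3$ (so $\varphi^{2}-1<-\tfrac12$) back into $P'$ gives $P'(x)\le-c_{2}(\beta)\,a\,x^{2}$ for $x\ge1+h$, so $P(x)\to-\infty$ and $\varphi'<0$ somewhere --- contradicting $a\notin\mathcal{A}^{-}$. The main obstacle is exactly this last direction: \emph{a priori} $\varphi$ might be so small that $\varphi(\varphi^{2}-1)$ is negligible and $P$ is never pushed negative; the remedy is the two-step estimate --- an a priori upper bound $\varphi<a/3$ followed by a matching lower bound $\varphi\ge c_{1}(\beta)\,a$ holding from a fixed point onward --- after which $P'$ is comparable to $-a x^{2}$. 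Since all the constants $c_{1},c_{2},h$ depend on $\beta$ alone, a single $\varepsilon$ (any sufficiently small one) does the job.
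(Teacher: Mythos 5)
Your proposal is correct, and its skeleton — the integrated identity \eqref{4.2} for $P(x)=x^4\varphi'(x)$, continuous dependence on a compact interval for the openness of both sets, and the "once $\varphi>1$ with $\varphi'>0$ it persists" observation for $\mathcal{A}^+$ — is exactly the paper's. The large-slope inclusion $(\delta,\infty)\subset\mathcal{A}^+$ is also the paper's argument up to constants (the paper bounds the forcing term by $\frac{\beta}{9\sqrt3}(x_0^3-1)$ on a fixed $[1,x_0]$; you do the same on $[1,2]$ — just note $\delta$ must also exceed an absolute constant such as $32$ so that $\varphi$ reaches $1$ before $x=2$). Where you genuinely diverge is the small-slope inclusion $(0,\varepsilon)\subset\mathcal{A}^-$. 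The paper assumes $a\notin\mathcal{A}^-$, lets $\varphi$ traverse the band $\frac14\le\varphi\le\frac12$ (which takes length at least $\frac1{4a}$ because $\varphi'<a$), and reads off from \eqref{4.2} the contradiction $0<a-\frac{15\beta}{2^9a}$, which fails when $a\lesssim\sqrt\beta$; this implicitly uses that under the hypothesis $\varphi$ eventually exceeds $\frac12$ (it reaches $1$ or tends to $1$). Your two-step scheme — the a priori bound $\varphi<a/3<1$, then the matching lower bound $\varphi\ge c_1(\beta)a$ from $x=1+h$ onward, whence $P'\le-c_2(\beta)ax^2$ and $P\to-\infty$ — avoids that limit discussion altogether and has the further merit that the resulting $\varepsilon$ is an absolute constant (any $\varepsilon<3$ with the minor care you indicate), whereas the paper's threshold degenerates as $\beta\to0$; the paper's version, on the other hand, gives the quantitatively larger bound $\varepsilon\sim\sqrt\beta$ for large $\beta$, consistent with the later estimate $a>3$ on $\mathcal{A}^+$. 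Both routes are sound, and your invocation of Lemma \ref{d.1} to place $a$ in $\mathcal{A}^0$ is harmless but not needed for the contradiction.
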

\begin{proof}
 In fact, integrating the differential equation in \eqref{4.1} gives us
\ber\label{4.2}
x^4\varphi'(x; a)=a+\frac\beta2\int_1^x { t^2}\varphi(t; a)(\varphi^2(t; a)-1)\,\dd t,\quad x>1.
\eer

We first show that $\mathcal{A}^-\neq\emptyset$. Assume otherwise $a\not\in\mathcal{A}^-$ for any $a>0$. Then $a\in\mathcal{A}^0\cup\mathcal{A}^+$. So $\varphi'(x, a)>0$ for all $x>1$.  Let $(1, K_a)$ denote the interval where
$0<\varphi(x;a)<1$ for $x\in (1,K_a)$. Then either $K_a\in (1,\infty)$ or  $K_a=\infty$. In the former case, $\varphi(K_a;a)=1$, and in the latter case, $\varphi(x;a)\to 1$ as $x\to\infty$. 
In both cases, we may find $x_1,x_2\in (0,K_a)$, $x_1<x_2$, such that
\ber\label{4.7}
\varphi(x_1; a)=\frac{1}{4};\quad \varphi(x_2; a)=\frac{1}{2};\quad \frac{1}{4}\leq \varphi(x; a)\leq\frac{1}{2},\quad  x\in[x_1, x_2].
\eer
On the other hand, in view of (\ref{4.2}), we have $0<\varphi'(x;a)<x^4\varphi'(x;a)<a$ for $x\in(1,K_a)$, which gives us after an integration over $[x_1,x_2]$  and using (\ref{4.7}) the lower bound
\ber\label{4.8}
x_2-x_1\geq\frac{1}{4a}.
\eer
Applying \eqref{4.7} and \eqref{4.8} in \eqref{4.2}, we obtain
\ber\label{4.9}
0<x_2^4\varphi'(x_2;a)&=&a+\frac\beta2\int_1^{x_2}{ t^2}\varphi(t; a)(\varphi(t; a)^2-1)\dd t \notag\\
&<&a+\frac{\beta}{2}\int_{x_1}^{x_2}\varphi(t; a)(\varphi(t; a)^2-1)\dd t\notag\\
&<&a-\frac{\beta}{2}\min\left\{|\varphi(\varphi^2-1)|\,\bigg|\, \frac{1}{4}\leq \varphi\leq\frac{1}{2}\right\}\int_{x_1}^{x_2}\dd t\notag\\
&<& a-\frac{15\beta}{2^9 a },
\eer
which cannot hold when $a>0$ is small enough. In other words, we have shown that $\mathcal{A}^{-}$ must contain an interval of the form $(0, \varepsilon)$ ($\varepsilon>0$). So $\mathcal{A}^{-}$ is nonempty
in particular. From the continuous dependence theorem of solutions of ordinary differential equations on initial values we see that the openness of $\mathcal{A}^{-}$ follows.

We next consider $\mathcal{A}^{+}$. For this purpose, first fix any $x_0>1$ and choose $a>0$ sufficiently large so that
\ber\label{4.11}
a>\frac{\beta}{2}\max\left\{|\varphi(\varphi^2-1)|\,\bigg|\,0\leq \varphi\leq 1\right\}\int_1^{x_0}t^2\,\dd t=\frac\beta{9\sqrt{3}}(x_0^3-1).
\eer
In view of (\ref{4.11}) and (\ref{4.2}), we obtain $\varphi'(x;a)>0$ for $x\in [1,x_0]$.
Furthermore,
integrating \eqref{4.2} and inserting the right-hand side of \eqref{4.11}, we get
\ber\label{4.10}
\varphi(x_0; a)&=&\int_1^{x_0}\frac1{x^4}\left({a}+\frac\beta2\int_1^x t^2\varphi(t; a)(\varphi^2(t; a)-1)\dd t\right)\dd x\notag\\
&\geq& \int_1^{x_0}\frac1{x^4}\left(a-\frac\beta{9\sqrt{3}}(x_0^3-1)\right)\,\dd x,
\eer
which may be made to exceed 1 when $a$ is large enough.  In view of this result and \eqref{4.2},  we see that $\varphi'(x; a)>0$ and $\varphi(x;a)>1$ for all $x>x_0$. In particular, $a\in{\mathcal{A}^+}$.
Thus we have shown that $(\delta, \infty)\subset\mathcal{A}^+$  when $\delta>0$ is sufficiently large.

To see $\mathcal{A}^+$ is open, let $a_0\in\mathcal{A}^+$ and $\varphi(x; a_0)$ be the corresponding solution of \eqref{4.1} with $a=a_0$. For $\sigma>0$ sufficiently small, we can find a unique $x_1>0$ so that $\varphi(x_1; a_0)=1+\sigma$. By the continuous dependence of solutions of ordinary differential equations on initial values, we can find a small open neighborhood $U$ of $a_0$ so that  $\varphi'(x; a)>0$ for $x\in[1,x_1]$ and $\varphi(x_1; a)>1+\frac{\sigma}{2}$.
From \eqref{4.2}, we see that $\varphi'(x; a)>0$ and $\varphi(x;a)>1+\frac\sigma2$ continue to hold for all $x>x_1$. This proves $U\subset\mathcal{A}^+$ and thus the openness of $\mathcal{A}^+$ follows as well.
\end{proof}

It will be useful to get some estimate for $a\in\mathcal{A}^+$. In fact, for $a\in\mathcal{A}^+$, let $x_1(a)>1$ be the unique point such that 
\be
\varphi(x_1(a);a)=1.
\ee
Then (\ref{4.2}) and the property $0<\varphi(x;a)<1$ for $x\in(1,x_1(a))$ lead to
\ber
1&=&\int^{x_1(a)}_1 \frac1{x^4}\left({a}+\frac\beta2\int_1^x t^2\varphi(t; a)(\varphi^2(t; a)-1)\dd t\right)\dd x\notag\\
&<& \frac a3\left(1-\frac1{x^3_1(a)}\right).
\eer
Thus, we obtain
\be
a>3,\quad a\in{\cal A}^+,
\ee
and 
\be\label{2.15}
x_1(a)>\left(\frac{a}{a-3}\right)^{\frac13},\quad a\in{\cal A}^+.
\ee


With this last estimate and using the method in \cite{Chen,CHMY}, we are ready to prove the following.

\begin{lemma}\label{d.3}
The sets $\mathcal{A}^-$ and $\mathcal{A}^+$ are actually open intervals. More precisely, there are numbers $0<a_1\leq a_2<\infty$ such that $\mathcal{A}^-=(0,a_1)$ and $\mathcal{A}^+=(a_2,\infty)$.
\end{lemma}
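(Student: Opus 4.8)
\medskip
\noindent\textbf{Proof strategy.} The plan is to prove the two monotonicity (``closure'') statements
\[
a\in\mathcal{A}^{+},\ a'>a\ \Longrightarrow\ a'\in\mathcal{A}^{+},\qquad
a\in\mathcal{A}^{-},\ 0<a'<a\ \Longrightarrow\ a'\in\mathcal{A}^{-}.
\]
Granting these, $\mathcal{A}^{+}=\bigcup_{a\in\mathcal{A}^{+}}[a,\infty)$ is a half–line, and likewise $\mathcal{A}^{-}$ is an interval with left endpoint $0$; since both are open and nonempty (Lemma \ref{d.2}) this forces $\mathcal{A}^{+}=(a_2,\infty)$, $\mathcal{A}^{-}=(0,a_1)$ with $a_1=\sup\mathcal{A}^{-}$, $a_2=\inf\mathcal{A}^{+}$ finite and positive (positivity using $(0,\vep)\subset\mathcal{A}^{-}$, which is disjoint from $\mathcal{A}^{+}$). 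Finally $a_1\le a_2$, for otherwise $(a_2,a_1)\subset\mathcal{A}^{+}\cap\mathcal{A}^{-}=\varnothing$ by Lemma \ref{d.1}.

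The argument rests on two ingredients. The first is a monotone ``energy'': for any solution of \eqref{2.3},
\be\label{energy-proposal}
\mathcal{E}(x)=\frac12 x^4(\vp')^2-\frac{\beta x^2}{8}(\vp^2-1)^2,\qquad
\mathcal{E}'(x)=-2x^3(\vp')^2-\frac{\beta x}{4}(\vp^2-1)^2\le0,
\ee
so $\mathcal{E}$ is nonincreasing. I would use this to show that \emph{every $a\in\mathcal{A}^{-}$ satisfies $\vp(x;a)<1$ for all $x>1$}: if $x_m>1$ is the first zero of $\vp'(\cdot;a)$, then $0<\vp(x_m;a)<1$ (positivity is clear; $\vp(x_m;a)=1$ is excluded by the uniqueness theorem, and $\vp(x_m;a)>1$ by the sign of $x^4\vp'$ past the level $1$), hence $\mathcal{E}(x_m;a)<0$; by monotonicity $\mathcal{E}(x;a)<0$ for $x\ge x_m$, which is incompatible with $\vp(x;a)=1$ (there $\mathcal{E}\ge0$), and $\vp<1$ already on $(1,x_m]$.

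The second ingredient is a Wronskian comparison. For $0<a<a'$ write $\vp_1=\vp(\cdot;a)$, $\vp_2=\vp(\cdot;a')$, $W=\vp_2\vp_1'-\vp_1\vp_2'$. Then $(x^4W)'=\tfrac{\beta x^2}{2}\vp_1\vp_2(\vp_1^2-\vp_2^2)$ and $x^4W\to0$ as $x\to1^{+}$; hence on any interval where $0<\vp_1<\vp_2$ one gets $x^4W<0$, so $(\vp_1/\vp_2)'=W/\vp_2^2<0$, i.e. $\vp_1/\vp_2$ is strictly decreasing with limit $a/a'<1$ at $x=1$. A first touching point would force this ratio to equal $1$, which is impossible; therefore $\vp_1<\vp_2$ for all $x$ at which both solutions exist, provided $\vp_1$ stays positive (which holds whenever $a\in\mathcal{A}^{0}\cup\mathcal{A}^{+}$, and also for $a\in\mathcal{A}^{-}$ up to its first turning point). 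Moreover, each of $\vp_1,\vp_2$ is then trapped in $(0,1)$ wherever the other is, so no blow-up occurs on the relevant intervals and the comparison is global there. With these tools the closures follow quickly. If $a\in\mathcal{A}^{+}$, $a'>a$, $a'\notin\mathcal{A}^{+}$, then $a'\in\mathcal{A}^{0}\cup\mathcal{A}^{-}$ (Lemma \ref{d.1}), so $\vp(x;a')\le1$ on $(1,\infty)$ (by definition of $\mathcal{A}^{0}$, or by the energy estimate above for $\mathcal{A}^{-}$); the comparison then gives $\vp(x;a)<\vp(x;a')\le1$ throughout, trapping $\vp(\cdot;a)$ in $(0,1)$ forever and contradicting $a\in\mathcal{A}^{+}$. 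If $a\in\mathcal{A}^{-}$, $0<a'<a$, $a'\notin\mathcal{A}^{-}$, then $a'\in\mathcal{A}^{0}\cup\mathcal{A}^{+}$, so $\vp(\cdot;a')$ is increasing and positive on $(1,\infty)$; the comparison gives $\vp(x;a')<\vp(x;a)<1$ everywhere with $\vp(\cdot;a')/\vp(\cdot;a)$ strictly decreasing from $a'/a<1$, which rules out $a'\in\mathcal{A}^{+}$ (it would need $\vp(\cdot;a')>1$ somewhere) and rules out $a'\in\mathcal{A}^{0}$ as well (then $\vp(\cdot;a')\to1$, hence $\vp(\cdot;a)\to1$ by squeezing, so the ratio tends to $1$, impossible).

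The main obstacle is exactly why the bare comparison does not suffice: the nonlinearity $\vp(\vp^2-1)$ is not monotone on $(0,1)$, so a priori a solution in $\mathcal{A}^{-}$ might turn around, dip, and re-cross the level $1$, wrecking the contradictions above. The monotone energy \eqref{energy-proposal} is precisely what forbids this, and establishing it together with the trichotomy $0<\vp(x_m;a)<1$ at the first turning point is the delicate step; the remaining work — verifying that the compared solutions do not escape to infinity before the relevant point — is routine once they are pinned between $0$ and $1$.
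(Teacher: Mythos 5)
Your proposal is correct, but it takes a genuinely different route from the paper's. The paper differentiates with respect to the shooting parameter, compares $\vp_a=\pa\vp/\pa a$ with $\vp/a$ to obtain $\vp_a>\vp/a>0$, deduces that the first crossing point $x_1(a)$ (resp.\ first turning point $y_1(a)$) is monotone in $a$, and then proves connectedness of ${\cal A}^{\pm}$ by showing that a finite endpoint of any interval contained in ${\cal A}^{+}$ (resp.\ ${\cal A}^{-}$) still belongs to that set, the quantitative lower bounds (\ref{2.15}) and (\ref{2.24}) keeping the limiting crossing/turning point away from $x=1$. You instead prove the stronger monotone-closure statements (everything above an ${\cal A}^{+}$ slope is in ${\cal A}^{+}$, everything below an ${\cal A}^{-}$ slope is in ${\cal A}^{-}$) directly, using two tools the paper does not use: the Lyapunov-type quantity ${\cal E}=\frac12x^4(\vp')^2-\frac{\beta x^2}{8}(\vp^2-1)^2$, which is indeed nonincreasing along solutions of (\ref{2.3}) and confines every ${\cal A}^{-}$ trajectory strictly below $1$ past its first turning point, and a Wronskian/ratio comparison of two solutions showing that $\vp(\cdot;a)/\vp(\cdot;a')$ decreases strictly from $a/a'$ as long as both stay positive and ordered; both of these computations check out, and the interval structure then follows from openness and the inclusions $(0,\vep)\subset{\cal A}^{-}$, $(\delta,\infty)\subset{\cal A}^{+}$ exactly as you say. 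Your route buys a cleaner and stronger monotonicity of the shooting sets without any limiting argument on crossing points; the paper's route buys that all comparisons live on the compact interval up to the first crossing/turning point, so no global-existence or trapping bookkeeping is needed. Two small points you should still write out explicitly: (a) your claim that $a'\in{\cal A}^{0}$ forces $\vp(x;a')\to1$ is not part of the definition of ${\cal A}^{0}$ and is only proved later in the paper (Lemma \ref{d.4}); it follows in one line from (\ref{4.2}), since a limit $\ell<1$ drives the right-hand side to $-\infty$ and contradicts $\vp'>0$; and (b) the continuation argument that a compared solution trapped in $(0,1]$ cannot cease to exist before the point where the contradiction is extracted (again via (\ref{4.2})) should be stated, though, as you note, it is routine once the solutions are pinned between $0$ and $1$.
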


\begin{proof}
Let $a\in{\cal A}^+$.  Using the notation
\be\label{2.16}
\vp_a\equiv\frac{\pa\vp(x;a)}{\pa a},
\ee
we see that (\ref{4.1}) gives us
\be\label{2.17}
(x^4 \vp'_a)'-\frac\beta2 x^2(3\vp^2-1)\vp_a=0,\quad \vp_a=0\mbox{ and } \vp_a'=1 \mbox{ at } x=1.
\ee
On the other hand, $\psi\equiv \frac\vp a$ satisfies 
\be\label{2.18}
(x^4 \psi')'-\frac\beta2 x^2(3\vp^2-1)\psi=-\frac\beta a x^2 \vp^3<0,\quad \psi=0\mbox{ and } \psi'=1 \mbox{ at } x=1.
\ee
Then a comparison argument applied to (\ref{2.17}) and (\ref{2.18}) gives \cite{Chen}
\be\label{2.19}
\vp_a(x;a)>\psi(x)=\frac{\vp(x;a)}a>0,\quad x\in (1,x_1(a)).
\ee
Furthermore, differentiating the equation
\be\label{2.20}
\vp(x_1(a);a)\equiv 1,\quad a\in{\cal A}^+,
\ee
with respect to $a$, we obtain 
$
\vp'(x_1(a);a)x'_1(a)+\vp_a(x_1(a);a)=0,
$
resulting in $x'_1(a)<0$ by \eqref{2.19}. In particular,  $x_1(a)$ decreases. Now we show that $\cal{A}^+$ is connected. For this purpose, it suffices to show that, if  $\cal{A}^+$ contains an interval of the form $(b_1,b_2)$ with $0<b_1<b_2<\infty$,
then $b_2\in {\cal A}^+$. In fact, for $a\in (b_1,b_2)$, we see that $x_1(a)$ decreases and satisfies the lower bound (\ref{2.15}). Using $a>b_1$,  we see that 
\be\label{2.21}
x_0\equiv \lim_{a\to b_2^-} x_1(a)
\ee
exists and lies in $(1,\infty)$. Inserting (\ref{2.21}) into (\ref{2.20}), we find $\vp(x_0;b_2)=1$. Besides, by continuity, we also have $\vp'(x;b_2)\geq0$ for $x>1$. Hence $x_0$ is the unique point for $x\in (1,\infty)$ where $\vp(x;b_2)=1$.
Of course $\vp'(x_0,b_2)\neq0$ otherwise $\vp\equiv1$ which is false. Hence $\vp'(x_0;b_2)>0$. From (\ref{4.1}), we have $(x^4\vp')'<0$ for $x\in(1,x_0)$. So $ x^4\vp'(x;b_2)>x_0^4\vp'(x_0;b_2)>0$ for $x\in(1,x_0)$. That is, $\vp'(x;b_2)>0$ for
$x\in(1,x_0)$. If $x>x_0$, then $\vp>1$. Thus (\ref{4.1}) gives us again $x^4\vp'(x;b_2)>x_0^4\vp'(x_0;b_2)>0$ for $x>x_0$. In other words, $\vp'(x;b_2)>0$ for all $x>1$ and $\vp(x;b_2)>1$ whenever $x>x_0$. This proves $b_2\in{\cal A}^+$ as desired.

We then show that $\cal{A}^-$ is also connected. For this purpose, note that for $a\in {\cal A}^-$, there is a point $y_1>1$ such that $\vp'(y_1;a)=0$. Use $y_1(a)$ to denote the left-most such a point. It is clear that
$0<\vp(x;a)<1$ for $x\in(1,y_1(a))$. In fact, since $1$ is an equilibrium of the equation in (\ref{4.1}), we have $\vp(y_1(a);a)\neq1$ because we have $\vp'(y_1(a);a)=0$ already. If $\vp(y_1(a);a)>1$, then there is a point $x_1\in (1,y_1(a))$ such
that $\vp(x;a)>1$ when $x\in(x_1,y_1(a))$. From the differential equation in (\ref{4.1}), we get $(x^4\vp')'>0$ for $x\in (x_1,y_1(a))$. So $x^4\vp'(x;a)<y_1^4(a)\vp'(y_1(a);a)=0$
for $x\in(x_1,y_1(a))$, resulting in the existence of a point $y_2\in (1,x_1)$ such that
$\vp'(y_2;a)=0$, which is false. Thus $\vp(y_1(a);a)<1$. Note also that a similar argument as before shows that $y_1(a)$ increases with respect to $a\in{\cal A}^-$. Now let $(b_1,b_2)\subset{\cal A}^-$. By Lemma \ref{d.2}, we may assume
\be\label{2.22}
b_1\geq \vep_0
\ee
for some $\vep_0>0$. It suffices to show $b_1\in{\cal A}^-$. In fact, taking $a\in(b_1,b_2)$ and integrating the differential equation in (\ref{4.1}) over $(1,y_1(a))$ and using $0<\vp(x;a)<1$ for $x\in(1,y_1(a))$, we have
\ber\label{2.23}
a&=&\frac\beta2\int_1^{y_1(a)}x^2\vp(x;a)(1-\vp^2(x;a))\,\dd x\notag\\
&<&\frac\beta2\max\{\vp(1-\vp^2)\,|\,0\leq \vp\leq1\}\int_1^{y_1(a)} x^2\,\dd x\notag\\
&=&\frac\beta{9\sqrt{3}}(y_1^3(a)-1).
\eer
In view of (\ref{2.22}) and (\ref{2.23}), we obtain the lower bound
\be\label{2.24}
y_1(a)>\left(1+\frac{9\sqrt{3}}\beta\vep_0\right)^{\frac13}.
\ee
Another by-product of the property $0<\vp(x;a)<1$ for $x\in(1,y_1(a))$ is that $\vp'(x;a)>0$ for $x\in(1,y_1(a))$. Now set
\be
y_0\equiv \lim_{a\to b_1^+} y_1(a).
\ee
Then (\ref{2.24}) gives us $y_0\in (1,\infty)$. Therefore, using continuity, we get $\vp'(y_0;b_1)=0$. Besides, for any $x\in(1,y_0)$, we have $x\in (1,y_1(a))$ when $a$ is
close to $b_1$. Thus
$0<\vp(x;a)<1$ and $\vp'(x;a)>0$ there. Letting $a\to b_1^+$, we find $0\leq\vp(x;b_1)\leq1$ and $\vp'(x;b_1)\geq0$ for $x\in(1,y_0)$.
Since $0$ and $1$ are equilibria of the differential equation in (\ref{4.1}) and $\vp'(y_0,b_1)=0$, we must have $0<\vp(y_0;b_1)<1$, which in turn implies $0<\vp(x;b_1)<1$ for $x\in(1,y_0)$.
Applying this result in the differential equation in (\ref{4.1}) again, we get $x^4 \vp'(x;b_1)>y_0^4\vp'(y_0;b_1)=0$ for $x\in(1,y_0)$ and $\vp''(y_0;b_1)<0$. Hence $\vp'(x;b_1)<0$ for $x\in (y_0,y_0+\delta)$
where $\delta>0$ is small enough. This establishes $b_1\in{\cal A}^-$.
\end{proof}

\begin{lemma}\label{d.4}
The set $\mathcal{A}^{0}$ has exactly one point. In other words, with the statement in Lemma \ref{d.3} such that ${\cal A}^- =(0,a_1)$ and ${\cal A}^+ =(a_2,\infty)$, we have
$a_1=a_2\equiv a_0$ so that ${\cal A}^0=\{a_0\}$. Moreover, with $a=a_0$ in (\ref{4.1}), we have $\vp(x;a_0)\to1$ as $x\to\infty$.
\end{lemma}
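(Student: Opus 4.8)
The plan is to combine the topological description of the shooting sets from Lemmas \ref{d.1}--\ref{d.3} with a strict monotonicity of the solution in the shooting slope $a$. Since $\mathcal{A}^-=(0,a_1)$ and $\mathcal{A}^+=(a_2,\infty)$ are disjoint open intervals and $(0,\infty)=\mathcal{A}^-\cup\mathcal{A}^0\cup\mathcal{A}^+$, one gets at once $\mathcal{A}^0=[a_1,a_2]$, a nonempty closed interval whose endpoints lie in $\mathcal{A}^0$; the task is then to prove $a_1=a_2$ and that the associated solution converges to $1$.

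First I would observe that for every $a\in\mathcal{A}^0$ the solution $\varphi(\cdot\,;a)$ is defined on all of $[1,\infty)$, is strictly increasing, and satisfies $0<\varphi(x;a)<1$ for $x>1$; hence $\varphi(x;a)\to\ell(a)$ for some $\ell(a)\in(0,1]$ as $x\to\infty$. If $\ell(a)<1$, then $\varphi(\varphi^2-1)\le-c$ for some $c>0$ and all large $x$, so \eqref{4.2} forces $x^4\varphi'(x;a)\to-\infty$ and $\varphi'$ eventually becomes negative, contradicting $a\in\mathcal{A}^0$. Therefore $\varphi(x;a)\to1$ as $x\to\infty$ for every $a\in\mathcal{A}^0$; in particular the last assertion of the lemma will follow once $\mathcal{A}^0=\{a_0\}$ is known.

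The key step is to show that, for $a\in\mathcal{A}^0$, the function $\varphi_a=\partial\varphi/\partial a$ (which solves \eqref{2.17}) satisfies $\varphi_a(x;a)>\varphi(x;a)/a>0$ for all $x>1$. For this I would introduce the Wronskian-type quantity $W(x)=x^4\big(\varphi\,\varphi_a'-\varphi'\,\varphi_a\big)$; using \eqref{4.1} and \eqref{2.17} one finds $W'=\beta x^2\varphi^3\varphi_a$ and $W(1)=0$. A first-zero argument then rules out any zero of $\varphi_a$ in $(1,\infty)$: were $\zeta$ the smallest such zero, then $W>0$ on $(1,\zeta]$ because $\varphi,\varphi_a>0$ there, whence $W(\zeta)=\zeta^4\varphi(\zeta)\varphi_a'(\zeta)>0$ would force $\varphi_a'(\zeta)>0$, which is impossible since $\varphi_a$ decreases into its first zero. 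Thus $\varphi_a>0$ throughout $(1,\infty)$, so $W>0$ there, and since $W=x^4\varphi^2\frac{d}{dx}(\varphi_a/\varphi)$ the ratio $\varphi_a/\varphi$ is strictly increasing; as $x\to1^+$ it tends to $1/a$ because $\varphi\sim a(x-1)$ and $\varphi_a\sim(x-1)$, which yields $\varphi_a/\varphi>1/a$ on $(1,\infty)$.

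Finally, suppose for contradiction that $a_1<a_2$. From $\frac{\partial}{\partial a}\log\!\big(\varphi(x;a)/a\big)=\varphi_a/\varphi-1/a>0$ we see that, for each fixed $x>1$, the map $a\mapsto\varphi(x;a)/a$ is strictly increasing on $\mathcal{A}^0=[a_1,a_2]$, so $\varphi(x;a_2)/a_2>\varphi(x;a_1)/a_1>0$ for all $x>1$. Letting $x\to\infty$ and using $\varphi(x;a_1),\varphi(x;a_2)\to1$ gives $1/a_2\ge1/a_1$, i.e.\ $a_1\ge a_2$, a contradiction. Hence $a_1=a_2\equiv a_0$, $\mathcal{A}^0=\{a_0\}$, and $\varphi(x;a_0)\to1$ as $x\to\infty$. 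I expect the Wronskian positivity argument to be the delicate point: one must ensure that $\varphi_a$ vanishes nowhere on the whole half-line $(1,\infty)$ (not merely up to a point where $\varphi$ reaches $1$, which here never occurs) and that $W$ retains its sign globally; the remaining steps are routine.
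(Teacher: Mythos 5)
Your proof is correct and follows essentially the same route as the paper: show $\varphi(x;a)\to 1$ for every $a\in\mathcal{A}^0$ via the integrated equation \eqref{4.2}, establish the key inequality $\varphi_a>\varphi/a$ on all of $(1,\infty)$, and integrate in $a$ over $[a_1,a_2]$ to reach a contradiction. The only difference is that you prove the comparison inequality explicitly through the Wronskian $W=x^4(\varphi\varphi_a'-\varphi'\varphi_a)$ and a first-zero argument, where the paper invokes a comparison theorem from the Ginzburg--Landau shooting literature applied to \eqref{2.17} and \eqref{2.18}; your version is a self-contained and correct rendering of that step.
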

\begin{proof}
Let $a\in\mathcal{A}^0$. The definition of $\mathcal{A}^0$ indicates that there is some $\varphi_{\infty}\in(0,1]$ such that $\vp(x;a)\to\vp_\infty$ as $x\to\infty$. If $\varphi_{\infty}<1$, then 
the integral on the right-hand side of  \eqref{4.2} diverges as $x\to\infty$. In particular, $\varphi'(x;a)<0$ when $x$ sufficiently large, which is false.  Hence we have
\be\label{2.26}
\lim_{x\to\infty}\vp(x;a)=1,\quad a\in{\cal A}^0.
\ee
On the other hand, in view of Lemma \ref{d.3},  there exist $a_1$ and $a_2$ so that $0<a_1\leq a_2<\infty$ and $\mathcal{A}^{0}=[a_1, a_2]$.  We now show $a_1=a_2$.  In fact, assume otherwise $a_1<a_2$. 
Then with the notation (\ref{2.16}), we see that (\ref{2.19}) is valid for $x_1(a)=\infty$. Thus, for any $x>1$, we have
\ber\label{2.27}
\varphi(x; a_2)&=&\varphi(x; a_1)+\int_{a_1}^{a_2}\varphi_a(x; a)\dd a\notag\\
&\geq&\varphi(x; a_1)+\int_{a_1}^{a_2}\frac{1}{a}\varphi(x; a)\dd a,\quad x>1.
\eer
Letting $x\to\infty$ in \eqref{2.27} and applying (\ref{2.26}), we obtain
\ber
1\geq1+\ln\frac{a_2}{a_1},
\eer
which is false. Thus ${\cal A}^0$ can only be a set of a single point.
\end{proof}

We now consider asymptotic behavior of the unique solution $\vp$ to the two-point boundary value problem (\ref{2.2})--(\ref{2.3}). 

First, in view of the shooting method solution based on the initial-value problem (\ref{4.1}) obtained in Lemmas \ref{d.1}--\ref{d.4} and the Cauchy--Kovalevskaya theorem, we see
that the asymptotic expansion (\ref{A1}) holds. Next, with $x=\e^t$, we may rewrite (\ref{2.3}) as
\be\label{2.30}
\frac{\dd^2\vp}{\dd t^2}+3\frac{\dd\vp}{\dd t}=\frac\beta2\vp(\vp+1)(\vp-1),\quad t=\ln x,
\ee
which indicates that,
for $x\gg1$ or $t\gg1$, the equation \eqref{2.30} in $\psi=\vp-1$ assumes the asymptotic form
\be
\frac{\dd^2 \psi}{\dd t^2}+3\frac{\dd \psi}{\dd t}-\beta\psi=0,\quad t=\ln x,
\ee
for which the solution vanishing at infinity is given as
\be\label{2.32}
\psi (t)=C\e^{-\lm_0 t},\quad \lm_0=\frac32+\sqrt{\left(\frac32\right)^2+\beta}, 
\ee
where $C$ is a constant. Consequently, we may consider a comparison function of the form
\be
\psi_\vep(t)=C\e^{-(\lm_0-\vep)t},\quad \vep\in(0,\lm_0).
\ee
Then $\psi_\vep$ satisfies the equation
\be\label{2.34}
\frac{\dd^2\psi_\vep}{\dd t^2}+3\frac{\dd\psi_\vep}{\dd t}=c_\vep\psi_\vep,\quad c_\vep=\left(\beta+\vep[3-2\lm_0+\vep]\right).
\ee
Let $t_\vep>0$ be sufficiently large and $\vep>0$ small so that
\be\label{2.35}
\frac\beta2 \vp(t)(\vp(t)+1)>c_\vep,\quad t>t_\vep;\quad c_\vep >0.
\ee
Now set $u=1-\vp-\psi_\vep$. Then we are led from (\ref{2.30}), (\ref{2.34}), and (\ref{2.35}) to the inequality
\be\label{2.36}
\frac{\dd^2 u}{\dd t^2}+3\frac{\dd u}{\dd t}>c_\vep u,\quad t>t_\vep.
\ee
Choose $C>0$ in (\ref{2.32})  large enough such that $u(t_\vep)=1-\vp(t_\vep)-\psi_\vep(t_\vep)<0$. With this and the boundary property $u(t)\to0$ as $t\to\infty$ and applying the maximum principle to (\ref{2.36}), we have
$u(t)\leq 0$ for all $t>t_\vep$. In other words,  we arrive at
\be\label{2.37}
1-\vp(t)\leq C(\vep)\e^{-(\lm_0-\vep)t},\quad t>t_\vep.
\ee
Returning to the variable $x=\ln t$, we get the estimate (\ref{A2}).

Thus, (i) and (ii) in Theorem \ref{th2.1} are established. We now prove (iii). Since the energy functional (\ref{2.1}) is not quadratic, it does not allow a direct proof that our solution obtained in (i) and (ii) indeed minimizes the energy. Below, we 
pursue an indirect proof of this fact.

\begin{lemma}\label{d.5}
The solution to the boundary value problem (\ref{2.2})--(\ref{2.3}) obtained in (i) and (ii) in Theorem \ref{th2.1} minimizes the energy (\ref{2.1}) among the
functions satisfying (\ref{2.2}).
\end{lemma}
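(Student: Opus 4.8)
The plan is to exploit a hidden convexity uncovered by the substitution $g=\vp^2$. First I would shrink the competitor class: given any admissible $\psi$ satisfying (\ref{2.2}) with $I(\psi)<\infty$, replace $\psi$ by $|\psi|$ — which leaves $I$ unchanged, since the integrand in (\ref{2.1}) depends only on $(\psi')^2$ and $\psi^2$ — and then by $\min\{|\psi|,1\}$, which does not increase either term of $I$ and still satisfies (\ref{2.2}). Hence it suffices to prove $I(\psi)\ge I(\vp)$ for competitors with $0\le\psi\le 1$. For such $\psi$, put $g=\psi^2$ and $g_0=\vp^2$, so that $0\le g,g_0\le1$, $g(1)=g_0(1)=0$, $g(\infty)=g_0(\infty)=1$, and
\be
I(\psi)=\int_1^\infty\Phi(x,g,g')\,\dd x,\qquad \Phi(x,g,p)=\frac{x^4p^2}{4g}+\frac{\beta x^2}{4}(1-g)^2,
\ee
the kinetic term being legitimately rewritten because $\frac{x^4(g')^2}{4g}=x^4(\psi')^2$ wherever $\psi>0$, while both sides vanish (a.e.) where $\psi=0$.

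A direct computation gives, for $g>0$,
\be\label{p2}
\Phi_{pp}=\frac{x^4}{2g}>0,\qquad \Phi_{gg}\Phi_{pp}-\Phi_{gp}^2=\frac{\beta x^6}{4g}>0,
\ee
so $\Phi(x,\cdot,\cdot)$ is strictly convex on $\{g>0\}$; equivalently, $I$ viewed as a functional of $g$ is convex. Since $\vp$ solves (\ref{2.3}), the function $g_0=\vp^2$ solves the Euler--Lagrange equation of this convex functional, $\big(\tfrac{x^4g_0'}{2g_0}\big)'=\Phi_g(x,g_0,g_0')$, and by part (i) of Theorem \ref{th2.1} we have $g_0>0$ on $(1,\infty)$. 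The core of the proof is then the tangent-plane inequality for convex integrands: pointwise a.e. one has
\be\label{p3}
\Phi(x,g,g')\ \ge\ \Phi(x,g_0,g_0')+\Phi_g(x,g_0,g_0')(g-g_0)+\Phi_p(x,g_0,g_0')(g'-g_0'),
\ee
by convexity of $\Phi(x,\cdot,\cdot)$ at the interior point $(g_0(x),g_0'(x))\in\{g>0\}$ when $g(x)>0$, and also at the measure-zero set where $\psi=\psi'=0$: there the left side equals $\tfrac{\beta x^2}{4}$ while the right side equals $\tfrac{\beta x^2}{4}(1-g_0^2)\le\tfrac{\beta x^2}{4}$.

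Integrating (\ref{p3}) over $(1,\infty)$, integrating the $\Phi_p$-term by parts, and using the Euler--Lagrange equation to cancel the bulk terms, one is left with
\be\label{p4}
I(\psi)-I(\vp)\ \ge\ \Big[\tfrac{x^4\vp'}{\vp}\,(g-g_0)\Big]_1^\infty .
\ee
Here I would invoke the sharp endpoint asymptotics of part (ii): near $x=1$, $\vp\sim a_0(x-1)$ gives $\tfrac{x^4\vp'}{\vp}\sim(x-1)^{-1}$, while $g-g_0=\psi^2-\vp^2=\mathrm{o}(x-1)$ because $\psi(1)=0$ and $\int_1 x^4(\psi')^2\,\dd x<\infty$ force $\psi(x)^2=\mathrm{o}(x-1)$; at $x=\infty$, $1-\vp=\mathrm{O}(x^{-(\lm_0-\vep)})$ with $\lm_0>3$ gives $x^4\vp'\to0$ and $g-g_0\to0$. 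Thus the bracket in (\ref{p4}) vanishes and $I(\psi)\ge I(\vp)$, as desired.

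The main obstacle is making the termwise integration by parts leading to (\ref{p4}) rigorous at $x=1$, where $\Phi_g(x,g_0,g_0')$ blows up like $(x-1)^{-2}$: the argument is clean for competitors with $\psi=\mathrm{O}(x-1)$ near $x=1$ (e.g.\ Lipschitz ones, for which $g-g_0=\mathrm{O}((x-1)^2)$ and every integrand in sight is integrable near $1$), and a general finite-energy $\psi$ is first reduced to this case by replacing it on $[1,1+\eta]$ with the line segment joining $0$ to $\psi(1+\eta)$, which preserves $0\le\psi\le1$ and (\ref{2.2}), changes $I$ by $\mathrm{o}(1)$ as $\eta\to0$, and then letting $\eta\to0$. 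I would also record that the strict convexity in (\ref{p2}) yields uniqueness of the minimizer among functions with $0\le\psi\le1$ and $\psi>0$ on $(1,\infty)$, which, combined with parts (i)--(ii) and the shooting classification of Lemmas \ref{d.1}--\ref{d.4}, gives the uniqueness assertion in Theorem \ref{th2.1}. A more pedestrian but more robust alternative avoiding these boundary subtleties is the direct method: produce a minimizer by weak lower semicontinuity together with the coercivity bound $\int_1^\infty|\psi'|\,\dd x\le(\tfrac13)^{1/2}I(\psi)^{1/2}$, verify it solves (\ref{2.2})--(\ref{2.3}), deduce $0<\psi<1$ and $\psi'>0$ on $(1,\infty)$ from the equation (a reflection argument handling positivity), and identify it with $\vp$ via $\psi'(1)\in\mathcal{A}^0=\{a_0\}$.
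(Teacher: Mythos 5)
Your proposal is correct, but it takes a genuinely different route from the paper. The paper's own proof is exactly the ``more pedestrian'' alternative you mention at the end: it runs the direct method, taking a minimizing sequence for (\ref{2.38})-type infimum, extracting a weak limit in $W^{1,2}(1,K)$ for every $K>1$, using the compact embedding into $C[1,K]$ and Fatou's lemma for lower semicontinuity, checking via the uniform bounds (\ref{2.39})--(\ref{2.40}) that the limit still satisfies (\ref{2.2}), and then identifying the minimizer with the shooting solution: after replacement by its modulus it is nonnegative, solves (\ref{2.3}), has $\varphi'(1)=a>0$ (else $\varphi\equiv0$), and so falls under the classification of Lemmas \ref{d.1}--\ref{d.4}. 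Your primary argument is instead a calibration/convexity proof: the substitution $g=\psi^2$ makes the integrand $\Phi(x,g,p)=x^4p^2/(4g)+\beta x^2(1-g)^2/4$ jointly convex (your Hessian computation is right), $g_0=\varphi^2$ solves the corresponding Euler--Lagrange equation, and the supporting-plane inequality plus integration by parts gives $I(\psi)\geq I(\varphi)$ against every admissible competitor, with no compactness argument and with uniqueness of the minimizer as a by-product. Each approach buys something: the paper's proof is soft and avoids all boundary asymptotics, but it only produces \emph{some} minimizer and then identifies it with $\varphi$, so uniqueness of minimizers rides on the ODE uniqueness; yours certifies minimality of the specific solution $\varphi$ directly and quantitatively, at the price of the endpoint technicalities you flag. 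Two small repairs: first, the set $\{\psi=0\}$ need not have measure zero (a competitor may vanish on a whole subinterval), but your pointwise verification there uses only that $\psi'=0$ a.e.\ on that set, so the inequality is unaffected; second, the integration-by-parts difficulty at $x=1$ (and at infinity) disappears if you integrate by parts on compact intervals $[1+\delta,M]$, where the bulk terms cancel exactly by the Euler--Lagrange equation, and pass to the limit only in the boundary term $x^4\varphi'(\psi^2-\varphi^2)/\varphi$; this term is $o(1)$ at $x=1+\delta$ by your estimate $\psi^2=o(x-1)$ together with (\ref{A1}), and tends to zero as $M\to\infty$ because $x^4\varphi'$ is positive and decreasing (hence $x^4\varphi'/\varphi$ is bounded) while $\psi^2-\varphi^2\to0$, so neither the competitor-modification step nor the claim $x^4\varphi'\to0$ is actually needed.
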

\begin{proof}
 Let $\{\varphi_n\}$ be a minimizing sequence of the problem
\be\label{2.38}
\eta\equiv\inf\{I(\vp)\,|\,\vp(1)=0,\vp(\infty)=1\}.
\ee
 Assume for all $n$ we have $E(\varphi_n)\leq \eta+1$. Then the Schwarz inequality leads to the uniform bound
\be\label{2.39}
|\varphi_n(x)-1|=\left|\int_x^\infty\varphi'_n(y)\dd y\right|
\leq\left(\int_x^\infty\frac{1}{y^4}\dd y\right)^{\frac{1}{2}}\left(\int_x^\infty y^4(\varphi_n'(y))^2\dd y\right)^{\frac{1}{2}}
\leq\left(\frac{\eta+1}{3x^3}\right)^{\frac12},
\ee
for $x>1$.
Similarly, we have
\be\label{2.40}
|\vp_n(x)|=\left|\int_1^x \vp_n'(y)\,\dd y\right|\leq\left( [\eta+1][x-1]\right)^{\frac12},\quad x>1.
\ee

For any $K>1$, we see that $\{\varphi_n\}$ is bounded in the Sobolev space $W^{1, 2}(1, K)$. Using a diagonal subsequence argument, we may find a subsequence of $\{\varphi_n\}$, which may still be denoted as $\{\varphi_n\}$, and an element
$\vp\in W^{1,2}_{\mbox{loc}}(1,\infty)$, such that  $\varphi_n \to \vp$ weakly  in $W^{1, 2}(1, K)$ for any $K>1$ as $n\to\infty$. By
the compact embedding of $W^{1, 2}(1, K)$ into $C[1, K]$, we have $\varphi_n\rightarrow\varphi$ in $C[1, K]$ as $n\to\infty$. Using these and Fatou's lemma in
\be
I(\varphi_n)\geq\int_1^{K}\left(x^4(\varphi'_n)^2+\frac{\beta x^2}{4}(\varphi_n^2-1)^2\right)\,\dd x,
\ee
we have
\be
\eta\geq\int_1^{K}\left(x^4(\varphi')^2+\frac{\beta x^2}{4}(\varphi^2-1)^2\right)\,\dd x.
\ee
Letting $K\to\infty$, we get $\eta\geq I(\vp)$. However, by the uniform estimates (\ref{2.39}) and (\ref{2.40}), we have $\vp(1)=0$ and $\vp(\infty)=1$. Therefore $\vp$ belongs to the admissible space of  the problem \eqref{2.38}.
Thus $I(\vp)=\eta$. That is, $\vp$ solves (\ref{2.38}).

Let $\vp$ solve \eqref{2.38}. Then the structure of the energy (\ref{2.1}) indicates that the replacement of $\vp$ by $|\vp|$ does not increase the energy. This implies that we may assume $\vp(x)\geq0$ for all $x\geq1$. 
Using $\vp(1)=0$, we get $\vp'(1)\geq0$. On the other hand, as a critical
point of the energy \eqref{2.1}, $\vp$ satisfies \eqref{2.3}. Thus $\vp'(1)>0$ otherwise $\vp\equiv0$ because 0 is an equilibrium of \eqref{2.3}. In other words, $\vp$ satisfies (\ref{4.1}) for some $a>0$. So $\vp$ is the unique
solution obtained in (i) and (ii) of Theorem \ref{th2.1} as claimed in (iii).
\end{proof}

The proof of (iv) of Theorem \ref{th2.1} will be presented in the next section.

\section{Energy estimates}
\setcounter{equation}{0}

We now estimate the minimum energy (\ref{1.13}) with $\vp$ be the solution of (\ref{2.2})--(\ref{2.3}) obtained in (i)--(iii) of Theorem \ref{th2.1}. For this purpose, we first apply the Bogomol'nyi \cite{Bo} trick to get for $I(\vp)$ defined in (\ref{2.1}) to be
\ber\label{5.1}
I(\varphi)
&=&\int_1^{\infty}\Big(\big(x^2\varphi'+\frac{\sqrt{\beta}x}{2}(\varphi^2-1)\big)^2-\sqrt{\beta}x^3(\varphi^2-1)\varphi'\Big)\dd x\notag\\
&=&\int_1^\infty\big(x^2\varphi'+\frac{\sqrt{\beta}x}{2}(\varphi^2-1)\big)^2\dd x+\sqrt{\beta}\int_1^\infty x^2(\varphi^3-3\varphi+2)\dd x+\frac{2\sqrt{\beta}}{3}\notag\\
&\equiv &I_1(\varphi)+I_2(\varphi)+\frac{2\sqrt{\beta}}{3},
\eer
where we have used the asymptotic estimate (\ref{A2}) and the factorization $\varphi^3-3\varphi+2=(\vp-1)^2(\vp+1)$ to get
\be\label{x3.2}
\lim_{x\to\infty} x^3 (\varphi^3(x)-3\varphi(x)+2)=0,
\ee
for a boundary term arising in \eqref{5.1}. Since both $I_1(\vp)$ and $I_2(\vp)$ are positive, we derive from (\ref{5.1}) the lower bound
\be\label{5.4}
f(\beta)=I(\vp)>\frac{2\sqrt{\beta}}{3}\quad\mbox{for any } \beta>0.
\ee

We now derive some upper estimates for $f(\beta)$. For this purpose,  let $\vp(x)$ be a function satisfying the boundary condition \eqref{2.2} and stays in $[0,1]$. Then the reduction \eqref{5.1} still holds 
under the condition (\ref{x3.2}) which we will observe. In order to find a sharper upper estimate of the minimum energy, we impose $I_1(\vp)=0$ which leads to the first-order equation
\ber
x^2\varphi'+\frac{\sqrt{\beta}x}{2}(\varphi^2-1)=0,\quad x>1,
\eer
whose solution subject to (\ref{2.2}) is
\ber\label{x5.6}
\varphi(x)=\frac{x^{\sqrt{\beta}}-1}{x^{\sqrt{\beta}}+1},\quad x\in[0,\infty),
\eer
such that \eqref{x3.2} is ensured by the condition
\be\label{3.6}
\beta>\frac94.
\ee

Assuming \eqref{3.6}, inserting \eqref{x5.6} into $I_2(\vp)$, and using the new variable $y=x^{\sqrt{\beta}}$, we obtain
\be\label{5.7}
I_2(\varphi)
=4\int_1^\infty y^{\frac{3}{\sqrt{\beta}}-1}\frac{3y+1}{(y+1)^3}\dd y\equiv \sigma(\beta).
\ee
This integral cannot be evaluated exactly. However, noting $\frac12<\frac{y^2(3y+1)}{(y+1)^3}<3$ for $y\in(1,\infty)$ in (\ref{5.7}), we have the estimate
\be\label{x}
\frac{2\sqrt{\beta}}{2\sqrt{\beta}-3}<\sigma(\beta)<\frac{12\sqrt{\beta}}{2\sqrt{\beta}-3},\quad \beta>\frac94.
\ee
Using this result in (\ref{5.1}), we arrive at the upper bound
\be\label{3.9}
f(\beta)< I(\vp)=\sigma(\beta)+\frac{2\sqrt{\beta}}3\quad\mbox{ for } \beta>\frac94,
\ee
where $\sigma(\beta)$ satisfies \eqref{x}, and here we have $f(\beta)<I(\vp)$ because now $\vp$ given in (\ref{x5.6}) is not a solution to (\ref{2.3}). 

We may also rewrite \eqref{x} more explicitly as (\ref{x2.7}). Thus (iv) of Theorem \ref{th2.1} follows.

By taking other concrete trial configurations, some other upper estimates for $f(\beta)$ may be obtained without assuming (\ref{3.6}). For example, we may choose 
\be
\vp=1-x^{-\alpha},\quad x\in [1,\infty),\quad \alpha>\frac32,
\ee
where the restriction to $\alpha$ is to ensure $I(\vp)<\infty$,  or
\be
\vp(x)=\frac{x-1}\alpha,\quad 1\leq x\leq \alpha+1;\quad \vp(x)=1,\quad x>\alpha+1,
\ee
where $\alpha>0$ is arbitrary. With the former choice of the trial function, we obtain the estimate
\be\label{3.12}
f(\beta)<I(\vp)=\frac\beta{4(4\alpha-3)}+\frac{\alpha^2+\beta}{2\alpha-3}-\frac\beta{3(\alpha-1)}\quad\mbox{for any }\beta>0.
\ee
We may further minimize the right-hand side of \eqref{3.12} to get a sharper upper estimate for $f(\beta)$ which we omit here. Note that, although this estimate is cruder for large values of $\beta$ than (\ref{3.9}) since it is linear in $\beta$,
it complements (\ref{3.9}) when $\beta$ assumes smaller values.

\medskip

In conclusion, we have shown that the minimization problem proposed in the study of Bolognesi and Tong \cite{BT} for the determination of an AdS monopole wall has a unique solution
which may be obtained by finding a correct initial slope of the solution to the  initial value problem associated with the  differential equation governing the Higgs profile function. Furthermore, some estimates for
the minimum energy are also obtained which may be used to calculate the position of the monopole wall explicitly in an asymptotic large-parameter limit, involving the Higgs mass, Higgs vacuum level, and the AdS
radius of curvature.


\begin{thebibliography}{99}

\bibitem{tH}
G. 't Hooft, Magnetic monopoles in unified gauge theories, \emph{Nucl. Phys.} B \textbf{79} (1974) 276--284.


\bibitem{P}
A.  M.  Polyakov, Particle spectrum in the quantum field theory, \emph{JETP Lett.} \textbf{20} (1974) 194--195.


\bibitem{BT}
S. Bolognesi and D. Tong, Monopole and holography, \emph{J. High Energy Phys.} \textbf{1101} (2011) 153.

\bibitem{Bol}
S. Bolognesi, Multimonopoles and magnetic bags, \emph{Nucl. Phys.} B. \textbf{752} (2006) 93--123.



\bibitem{Chen}
X. Chen, C. M. Elliott, and Q. Tang, Shooting method for vortex solutions of a complex-valued Ginzburg--Landau equation, \emph{Proc. Roy. Soc. Edin.} A \textbf{124} (1994) 1075--1088.

\bibitem{CHMY}
X. Chen, S. Hastings, J. B. McLeod, and Y. Yang, A nonlinear elliptic equation arising from gauge field theory and cosmology, {\em Proc. Roy. Soc. London} A {\bf 446} (1994) 543--478.



\bibitem{Bo}
E. B. Bogomol'nyi, The stability of classical solutions, {\em Sov. J. Nucl. Phys.} {\bf24} (1976) 449--454.











\end{thebibliography}
\end{document}